\keywords{Two-variable guarded fragment, local counting constraints, 
satisfiability, $\exp$-complete}
\newcommand{\fN}{\mathfrak{N}}
\newcommand{\fNinf}{\mathfrak{N}_{\infty}}
\newcommand{\var}{\text{var}}
\begin{document}

\title[$\gftwo$ with local Presburger constraints]{On two-variable 
guarded fragment logic with expressive local Presburger constraints}

\author[C.-H.~Lu]{Chia-Hsuan Lu\lmcsorcid{0009-0008-5446-9150}}[a]
\author[T.~Tan]{Tony Tan\lmcsorcid{0009-0005-8341-2004}}[b]

\address{Department of Computer Science, University of Oxford, 
United Kingdom.}	
\email{chia-hsuan.lu@cs.ox.ac.uk}  

\address{Department of Computer Science, University of Liverpool, 
United Kingdom.}	
\email{tonytan@liverpool.ac.uk}  





\begin{abstract}
We consider the extension of the two-variable guarded fragment logic
with local Presburger quantifiers.
These are quantifiers that can express properties such as 
``the number of incoming blue edges plus 
twice the number of outgoing red edges
is at most three times the number of incoming green edges''
and captures various description logics with counting,
but without constant symbols.
We show that the satisfiability problem for this logic 
is $\exp$-complete.
While the lower bound already holds for 
the standard two-variable guarded fragment logic,
the upper bound is established by a novel, 
yet simple deterministic graph-based algorithm.
\end{abstract}

\maketitle

\section{Introduction}
\label{sec:intro}

In this paper we consider the extension of two-variable guarded fragment logic
with the so-called {\em local Presburger quantifiers}, 
which we denote by $\gptwo$.
These are quantifiers that can express local numerical properties such as
``the number of outgoing red edges plus 
twice the number of incoming green edges
is at most three times the number of outgoing blue edges.''
It was first considered by Bednarczyk, et. al.~\cite{BOPT21} and
trivially subsumes $\gctwo$ 
(two-variable guarded fragments with counting quantifiers),
which captures various description logics with counting such as
$\mathcal{ALCIHQ}$ and
$\mathcal{ALCIH}b^{\textsf{self}}$~\cite{dl-textbook,dl-handbook03,gradel-dl}.
Bednarczyk, et. al.~\cite{BOPT21} showed that 
both satisfiability and finite satisfiability problems are in 3-$\nexp$
by reduction to the two-variable logic with counting quantifiers.

We show that the satisfiability problem for this logic 
is $\exp$-complete.
The lower bound is already known for the standard 
two-variable guarded fragments~[Corollary~4.6]\cite{Gradel99-guarded}.
Our contribution is the upper bound,
which is established by a novel, yet simple 
deterministic exponential time algorithm which works similarly 
to the type elimination approach, 
first introduced by Pratt~\cite{Pratt79}.
Intuitively, it starts by representing the input sentence
as a graph whose vertices and edges represent the allowed types.
It then successively eliminates the vertex or edge
that contradicts the input sentence until 
there is no more vertex or edge to eliminate.

Our algorithm has a markedly different flavour 
from the standard tableaux method usually used to establish
the upper bound of guarded fragments.
Note that the tableaux method works by exploiting the so-called 
{\em tree-like model} property,
where it tries to construct a tree-like model using 
a polynomial space alternating Turing machine and
a configuration of the Turing machine corresponds to
an element in the model.
To apply this method, it is essential that 
there is only a polynomial bound on the branching degree of 
each node in the tree
and it is not clear whether this bound still holds for $\gptwo$.
In fact, already for $\gctwo$,
the degree can be exponential 
(when the counting quantifiers are encoded in binary).
To circumvent the exponential blow-up,
the $\exp$ upper bound for the satisfiability problem of $\gctwo$
is obtained by first
reducing it to three-variable guarded fragments,
before applying the tableaux method~\cite{Kazakov-gf}.
It was not clear {\em a priori} 
how this technique can be extended to the satisfiability of~$\gptwo$.


\paragraph*{Acknowledgement.}
Recently Bednarczyk and Fiuk~\cite{BF22} independently
obtained the same $\exp$ upper bound for the satisfiability 
problem for $\gptwo$.
Their proof uses the tableaux method.
To avoid the exponential branching degree, 
they restrict each node in the tableau to correspond only 
to the \emph{type} of an element in a model and 
it branches only to the different types of the children element. 
This method yields an alternating polynomial space algorithm,
hence, a deterministic exponential time algorithm.
A brief comparison between their algorithm and ours 
is presented in Section~\ref{subsec:comparison}.

\paragraph*{Other related work.}
The guarded fragment is one of the most prominent decidable 
fragments of first-order logic~\cite{AndrekaNB98}.
The satisfiability problem is 2-$\exp$-complete and 
it becomes $\exp$-complete
when the number of variables or the arity of the signature is 
fixed~[Theorem~4.4 and Corollary~4.6]\cite{Gradel99-guarded}.
Various description (DL) and modal (ML) logics 
are captured by the fragment when 
the arity is fixed to two~\cite{gradel-dl,dl-handbook03,dl-textbook}.
The key reason for the decidability of the guarded fragment 
is the {\em tree-like model property}
which allows the application of the tableaux method~\cite{Vardi96}.

Recently, a deterministic exponential time algorithm for 
a fragment of the two-variable logic was proposed
by Lin, et. al.~\cite{llt21}.
The algorithm there is also a graph-based algorithm,
and has a similar flavour to type-elimination algorithm.
However, it is not clear, {\em a priori}, 
how Lin, et. al.'s algorithm can be extended to $\gptwo$.

Pratt-Hartmann~\cite{Pratt-Hartmann-gf} proposed 
an elegant reduction of the satisfiability problem for $\gctwo$ formulas
to the solvability of (exponential size) homogeneous instances of 
Integer Linear Programming (ILP)
which are of the form $A\vx=0 \wedge B\vx\geq \vc$, 
where $A$ and $B$ are matrices with integer entries,
$\vx$ is a (column) vector of variables and 
$\vc$ is a vector of integers.
To check whether $A\vx=0 \wedge B\vx\geq \vc$ 
admits a solution in $\bbN$,
it is sufficient to check whether it admits a solution in $\bbQ$,
which is known to be in $\ptime$.
This implies the $\exp$ upper bound for both 
the satisfiability and finite satisfiability problems for $\gctwo$.

In general, $\gptwo$ captures various description logics
with counting such as $\mathcal{ALCIHQ}$ and 
$\mathcal{ALCIH}b^{\textsf{self}}$,
but without nominals~\cite{dl-handbook03}.
Note that allowing nominals in $\gctwo$ makes the complexity
of the satisfiability problem becomes $\nexp$-complete~\cite{Tobies01}.

Some logics that allow similar quantifiers as 
the local Presburger quantifiers
were proposed and studied by various 
researchers~\cite{baader17,bbr20,demri10,kupke10}.
The decidability result is obtained by the tableaux method,
but their logics do not allow the inverses of binary relations.

The extension of {\em one-variable logic} with quantifiers of the form
$\exists^S x  \ \phi(x)$, where $S$ is a ultimately periodic set, 
is $\np$-complete~\cite{bartosztcs}.
This logic is similar to the quantifier-free fragment of 
Boolean Algebra with Presburger Arithmetic (QFBAPA) introduced by
Kuncak and Rinard~\cite{bapa07}.
Semantically $\exists^S x  \ \phi(x)$ means 
the number of $x$ where $\phi(x)$ holds is in the set $S$.
The extension of two-variable logic with such quantifiers 
is later shown to be 
$2$-$\nexp$ by Benedikt, et. al.~\cite{BKT20}
and the proof makes heavy use of the {\em biregular graph method} 
introduced by Kopczynski and Tan~\cite{KT15}
to analyse the spectrum of two-variable logic with counting quantifiers.
These proofs and results do not apply in our setting since
the logics they considered already subsume the two-variable logic.

%
%

\paragraph*{Organisation}
This paper is organised as follows.
In Section~\ref{sec:presburger} we define \emph{linear constraints} and 
review two structures of natural numbers 
for their semantics.
One structure is the standard structure of natural numbers and 
the other is its extension with $\infty$ value.
In Section~\ref{sec:gptwo} we present the formal definition of $\gptwo$. 
The main result is presented in Section~\ref{sec:main}.
We conclude with Section~\ref{sec:conclusion}.

\section{Presburger arithmetic}
\label{sec:presburger}

Let $\bbN$ denote the set of natural numbers including $0$.

%
%
%

\paragraph*{Linear constraints}
We assume a countable infinite set $\mathcal{X}$ of variables.
A \emph{linear constraint} $\xi$ is a constraint of the form:
\begin{align*}
\label{eq:lin-cons}
\xi & \quad := \quad
\kappa_1 x_1 \ + \ \cdots \ + \ \kappa_k x_k \quad \circledast \quad 
\delta \ + \ 
\lambda_1 y_1 \ + \ \cdots \ + \ \lambda_\ell y_\ell,
\end{align*}
where $x_1,\ldots,x_k,y_1,\ldots,y_{\ell}$ 
are variables from $\mathcal{X}$,
all $\delta,\kappa_1,\ldots,\kappa_k,\lambda_1,\ldots,\lambda_{\ell}$ 
are from $\bbN$
and $\circledast$ is one of the symbols 
$=$, $\neq$, $\leq$, $\geq$, $<$, $>$.
$\equiv_d$ or $\not\equiv_d$, 
where $d \in \bbN-\{0\}$.
The relation $\equiv_d$ is intended to mean congruence modulo $d$, 
i .e., $n\equiv_d m$ if and only if
$m,n\in \bbN$ and there is $k_1,k_2\in \bbN$ such that $n+k_1d=m+k_2d$.

We let $\var(\xi)$ denote the set of variables that 
appear in the constraint $\xi$ mentioned above, i.e.,
$\var(\xi)=\{x_1,\ldots,x_k,y_1,\ldots,y_{\ell}\}$.
The coefficients in $\xi$ are the natural numbers 
$\delta,\kappa_1,\ldots,\kappa_k,\lambda_1,\ldots,\lambda_{\ell}$
which include $d$ when $\circledast$ is $\equiv_d$ or $\not\equiv_d$.

A finite set $\cC$ of linear constraints is also called 
a \emph{system of linear constraints},
or in short, {\em system}.
When convenient, we will also view a system $\cC$ 
as a conjunction of its constituent linear constraints.
We let $\var(\cC)$ denote the set $\bigcup_{\xi \in \cC} \var(\xi)$.
The coefficients in $\cC$ are the coefficients 
in the linear constraints in $\cC$.

We will consider two structures $\fN$ and $\fNinf$ in which 
the satisfaction of a linear constraint is defined.
Intuitively $\fN$ is the standard structure of natural numbers and 
$\fNinf$ is its extension with the value $\infty$.

\paragraph*{The structure $\fN$.}
The semantics of linear constraints can be naturally defined over 
the structure of natural numbers
$\fN = (\bbN, +, \cdot, \leq, 0, 1)$ where 
$+,\cdot, \leq, 0, 1$ are interpreted in the standard way.
Note that by using the relation $\leq$ and the equality predicate $=$,
we can define the relations $<$ and $\equiv_d$ in $\fN$ as follows.
\begin{itemize}
\item
$m < n$ if and only if $m\leq n$ and $m\neq n$, for every $n,m \in \bbN$.
\item
$n\equiv_d m$ if and only if 
there is $k_1,k_2\in \bbN$ such that $n+k_1d=m+k_2d$.
\end{itemize}

An {\em assignment} to the variables in $\xi$ is 
a mapping $F:\var(\xi)\to\bbN$.
It satisfies the linear constraint $\xi$, if $\xi$ holds in $\fN$ when 
each variable $x$ is assigned with the value $F(x)$.
A \emph{solution} to a system $\cC$ is
an assignment $F:\var(\cC)\to\bbN$ that satisfies 
every linear constraint in $\cC$.
If such a solution exists, 
we say that $\cC$ \emph{admits a solution} in $\fN$.
The following theorem establishes useful bounds on the solution
of a system of linear constraints.

\begin{thm}
\label{theo:caratheodory}
There are constants $c_1,c_2\in \bbN$ such that
for every system $\cC$ of linear constraints, the following holds
where $t=|\cC|$ and $M$ is the maximal coefficient 
in $\cC$.\footnote{The results by 
Eisenbrand and Shmonin~\cite[Theorem~1]{caratheodory-integer}
and Papadimitriou~\cite[Theorem]{papa-ilp} are stated 
in terms of Integer Linear Programming (ILP), 
which is a conjunction of a finite number of linear constraints 
not involving $\equiv_d$ and its negation $\not\equiv_d$. 
Since $a\equiv_d b$ is equivalent to 
$a+k_1d=b+k_2d)$ for some $k_1,k_2\in \bbN$,
it is obvious that the results also hold when $\equiv_d$ is involved.
Note that when $\cC$ contains a linear constraint involving $\equiv_d$,
the integer $d$ is considered a coefficient in $\cC$.}
\begin{enumerate}
\item\cite[Theorem~1]{caratheodory-integer}
If $\cC$ admits a solution in $\fN$,
then it admits a solution in $\fN$ in which
the number of variables assigned with non-zero values is at most 
$c_1t\log_2(c_2tM)$.
\item\cite[Theorem]{papa-ilp}\footnote{There is only one theorem 
in the paper by 
Papadimitriou~\cite{papa-ilp} and it is without a number.}
If $\cC$ admits a solution in $\fN$,
then it admits a solution in $\fN$ in which
every variable is assigned with a value at most $c_1t(tM)^{c_2t}$.
\end{enumerate}
\end{thm}

Combining (1) and (2) in Theorem~\ref{theo:caratheodory},
we obtain the following corollary,
which will be useful to establish the upper bound of our algorithm.

\begin{cor}
\label{cor:caratheodory}
There are constants $c_1,c_2\in \bbN$ such that
for every system $\cC$ of linear constraints, 
if $\cC$ admits a solution in $\fN$,
then it admits a solution in $\fN$ in which
the number of variables assigned with non-zero values is 
at most $c_1t\log_2(c_2tM)$
and every variable is assigned with a value at most $c_1t(tM)^{c_2t}$,
where $t=|\cC|$ and $M$ is the maximal coefficient in the system $\cC$.
\end{cor}
\begin{proof}
Let $c_1,c_2$ be the constant in Theorem~\ref{theo:caratheodory}.
Let $\cC$ be a system of linear constraints, where
$t=|\cC|$ and $M$ is the maximal coefficient in $\cC$.
Suppose $\cC$ admits a solution in $\fN$.
By~(1) in Theorem~\ref{theo:caratheodory},
it admits a solution in $\fN$ in which
the number of variables assigned with non-zero values is at most 
$c_1t\log_2(c_2tM)$.
Let $\mathcal{X}$ be the set of variables in $\var(\cC)$
that are assigned with non-zero values.
Thus, $|\mathcal{X}|\leq c_1t\log_2(c_2tM)$.

Let $\cC'$ be the system obtained from $\cC$ by 
assigning all the variables not in $\mathcal{X}$ with the zero value.
Thus, $\var(\cC')=\mathcal{X}$.
Let $t'=|\cC'|$ and $M'$ be the maximal coefficient in $\cC'$.
Note that $t'\leq t$ and $M'\leq M$
and that $\cC'$ admits a solution in $\fN$.
By~(2) in Theorem~\ref{theo:caratheodory},
$\cC'$ admits a solution in $\fN$ in which
every variable is assigned with a value at most 
$c_1t'(t'M')^{c_2t'}\leq c_1t(tM)^{c_2t}$.
This solution of $\cC'$ can be extended to a solution of $\cC$
where every variable not in $\mathcal{X}$ is assigned with zero value.
\end{proof}

\paragraph*{The structure $\fNinf$.}
We assume a constant symbol $\infty$ which 
symbolises the infinite value.
The extension of $\fN$ with $\infty$ is the structure 
$\fNinf = (\bbN \cup\{\infty\}, +,\cdot, \leq, 0, 1)$
where $+,\cdot,\leq$ involving elements in $\bbN$ 
are defined as in $\fN$.
When $\infty$ is involved, they are defined as follows.
$n < \infty$ and $n+\infty=\infty+n=\infty+\infty=\infty$ 
for every $n\in \bbN$,
$n\cdot\infty=\infty\cdot n = \infty$ for every $n\neq 0$,
and $0\cdot \infty = \infty\cdot 0=0$.
An \emph{assignment} is a mapping $F:\var(\xi)\to\bbN\cup\{\infty\}$.
The notion of a system admitting a solution in $\fNinf$ 
is defined in the similar manner as in $\fN$.

Note that the constraint $x=x+1$ does not admit a solution in $\fN$,
but admits a unique solution in $\fNinf$ 
where $x$ is assigned with the value $\infty$.
Thus, we can view the satisfiability of 
a system of linear constraints in $\fN$ 
as a special case of the satisfiability in $\fNinf$,
where we have the linear constraint $x\neq x+1$ for every variable $x$.

It is a folklore result that Corollary~\ref{cor:caratheodory} 
can be extended to $\fNinf$,
see, e.g.,~\cite[Section~2.3]{c2e-Pratt-Hartmann}.
We state it formally in Corollary~\ref{cor:caratheodory-inf}.


\begin{cor}
\label{cor:caratheodory-inf}
There are constants $c_1,c_2\in \bbN$ such that
for every system $\cC$ of linear constraints, 
if $\cC$ admits a solution in $\fNinf$,
then it admits a solution in $\fNinf$ in which
the number of variables assigned with non-zero values 
is at most $c_1t\log_2(c_2tM))$
and every variable is assigned with either $\infty$ or 
a value at most $c_1t(tM)^{c_2t}$,
where $t=|\cC|$ and $M$ is the maximal coefficient in $\cC$.
\end{cor}

\section{Two-variable guarded fragment with local Presburger 
quantifiers \texorpdfstring{$(\gptwo)$}{GP²}}
\label{sec:gptwo}

We fix a vocabulary $\Sigma$ consisting of 
only unary and binary predicates.
We do not allow for constant symbols and 
we will consider the logic with the equality predicate.
As usual, for a vector $\vx$ of variables,
$\varphi(\vx)$ denotes a formula $\varphi$ whose 
free variables are exactly those in $\vx$.

Let $\cA$ be a structure and let $a$ be an element in $\cA$.
For a formula $\varphi(x,y)$, we denote by $|\varphi(x,y)|^{x/a}_{\cA}$
the number of elements $b$ such that $\cA,x/a,y/b \models \varphi(x,y)$.
When the value $|\varphi(x,y)|^{x/a}_{\cA}$ 
is not finite, 
we write $|\varphi(x,y)|^{x/a}_{\cA}=\infty$.

\paragraph*{Local Presburger (LP) quantifiers.}
The \emph{local Presburger} (LP) quantifiers are quantifiers of the form:
\[
\cP(x) \quad :=\quad
\sum_{i=1}^n\ \lambda_i\cdot \#_y^{r_i}[\varphi_i(x,y)] 
\quad \circledast \quad 
\delta
\ + \ \sum_{i=1}^m\ \kappa_i\cdot \#_y^{s_i}[\psi_i(x,y)],
\]
where $\delta,\lambda_i,\kappa_i \in \bbN$;
each $r_i,s_i$ is an atom $R(x,y)$ or $R(y,x)$ 
for some binary relation $R$;
each $\varphi_i(x,y),\psi_i(x,y)$ is 
a formula with free variables $x$ and $y$;
and $\circledast$ is one of the symbols 
$=$, $\neq$, $\leq$, $\geq$, $<$, $>$, $\equiv_d$ or $\not\equiv_d$, 
where $d \in \bbN-\{0\}$.
Note that $\cP(x)$ has precisely one free variable $x$.
We say that {\em $\cP(x)$ holds in $\cA,x/a$}, 
denoted $\cA, x/a \models \cP(x)$, if 
the (in)equality $\circledast$ holds in $\fNinf$ when 
each $\#_y^{r_i}[\varphi_i(x,y)]$ and $\#_y^{s_i}[\psi_i(x,y)]$ 
are substituted with 
$|r_i(x,y)\wedge \varphi_i(x,y)|^{x/a}_{\cA}$ and
$|s_i(x,y)\wedge \psi_i(x,y)|^{x/a}_{\cA}$, respectively.

Note that the negation of the LP quantifier $\cP(x)$
stated above is obtained by changing the relation symbol $\circledast$
to its negated counterpart, i.e.,
the symbol $=$ is changed to $\neq$, 
$\leq$ to $>$, $\equiv_d$ to $\not\equiv_d$ and so on.
For example, the negation of \
$3\cdot \#_y^{R(x,y)}[\varphi(x,y)] \equiv_5 7$
\ is\  
$3\cdot \#_y^{R(x,y)}[\varphi(x,y)] \not\equiv_5 7$.

In the following,
to avoid clutter, we will allow the constants in an LP quantifier 
to be negative integers and write it in the form:
\[
\cP(x) \quad :=\quad
\sum_{i=1}^n\ \lambda_i\cdot \#_y^{r_i}[\varphi_i(x,y)] 
\quad \circledast \quad 
\delta
\]
When we do so, it is implicit that 
we mean the LP quantifier obtained after rearranging the terms
so that none of the coefficients are negative integers.
For example, when we write:
\[
3\cdot \#_y^{R(x,y)}[\varphi(x,y)]\ -\ 2\cdot \#_y^{S(y,x)}[\psi(x,y)]
\  =\ -7,
\]
we mean:
\[
2\cdot \#_y^{S(y,x)}[\psi(x,y)] \ = \
7 \ + \
3\cdot \#_y^{R(x,y)}[\varphi(x,y)].
\]

We say that a quantifier $\cP(x)$ is in {\em basic form}, 
if it is of the form:
\[
\cP(x) \quad :=\quad
\sum_{i=1}^n\ \lambda_i\cdot \#_y^{R_i(x,y)}[\varphi_i(x,y)] 
\quad \circledast \quad 
\delta,
\]
where each $\varphi_i(x,y)$ is either 
the equality $x=y$ or the inequality $x\neq y$.

\begin{rem}
With LP quantifiers one can state whether 
an element has finite or infinitely many outgoing edges.
Consider the following LP quantifier:
\begin{align*}
\cP_{\infty}(x) & \quad := \quad 
\#_y^{R(x,y)}[\top] = \#_y^{R(x,y)}[\top] +1.
\end{align*}
Since the equality only holds when $|R(x,y)|^{x/a}_{\cA}=\infty$,
it follows that 
$\cA,x/a \models \cP_{\infty}(x)$ if and only if 
there are infinitely many outgoing $R$-edges 
from $a$ in the structure $\cA$.

Note that the negation of $\cP_{\infty}(x)$ is:
\begin{align*}
\cP_{\text{fin}}(x) & \quad := \quad 
\#_y^{R(x,y)}[\top] \neq \#_y^{R(x,y)}[\top] +1.
\end{align*}
Thus, $\cA,x/a \models \cP_{\text{fin}}(x)$ if and only if 
there are finitely many outgoing $R$-edges from $a$ 
in the structure $\cA$.
\end{rem}



\paragraph*{The guarded fragment class.}
The class $\gf$ of guarded fragment logic is 
the smallest set of first-order formulas 
such that:\footnote{In this paper $\vx,\vy,\vz$ denote
vectors of variables and 
$\vx\subseteq \vz$ means that all the variables that 
appear in $\vx$ also appear in $\vz$.}
\begin{itemize}
\item
$\gf$ contains all atomic formulas $R(\vx)$ and 
equalities between variables.
\item
$\gf$ is closed under boolean combinations.
\item
If $\varphi(\vx)$ is in $\gf$, $R(\vz)$ is an atom
and $\vx,\vy\subseteq \vz$,
then both $\exists \vy\ R(\vz) \wedge \varphi(\vx)$ and 
$\forall \vy\ R(\vz) \to \varphi(\vx)$ are also in $\gf$.
\end{itemize}
We define the class $\gp$ to be the extension of $\gf$ 
with LP quantifiers,
i.e., by adding the following rule.
\begin{itemize}
\item
An LP quantifier
$\cP(x):=\ \sum_{i=1}^n\ \lambda_i\cdot \#_y^{r_i}[\varphi_i(x,y)] 
\ \circledast \ \delta$ is in $\gp$
if and only if each $\varphi_i(x,y)$ is in $\gp$.
\end{itemize}
We denote by $\gptwo$ the restriction of $\gp$ to 
formulas using only two variables: $x$ and $y$.
As before, when considering $\gptwo$ formulas, we assume that
the vocabulary is $\Sigma$, i.e., 
the arities of the predicates is at most $2$ and
there is no constant symbol.

\begin{rem}
\label{rem:local-quantifier}
The standard quantifiers $\forall$ and $\exists$
in $\gftwo$ can be expressed using LP quantifiers.
The universal quantifier $\forall y\ r(x,y) \to \varphi(x,y)$
is equivalent to:
$$
\#_y^{r(x,y)}[\neg\varphi(x,y)] \ =\ 0,
$$
and $\exists y\ r(x,y) \wedge \varphi(x,y)$
is equivalent to: 
$$
\#_y^{r(x,y)}[\varphi(x,y)]\ \geq\ 1.
$$
It is also for this reason that we call LP quantifiers 
``local Presburger quantifiers''
as it allows us the Presburger arithmetic reasoning 
on the neighbourhood of an element.
The term ``quantifier'' in this case is similar in spirit with 
H\"artig quantifiers (cardinality comparison quantifiers)~\cite{hartig,hartig-survey} and
ultimately periodic counting quantifiers~\cite{BKT20}.
\end{rem}

We denote by $\satgptwo$ the problem that
on input $\gptwo$ formula $\varphi$,
decides if it is satisfiable.
The following lemma states that
it suffices to consider only $\gptwo$ formulas in a normal form.

\begin{lem}
\label{lem:normal-form}
There is a linear time algorithm that converts a $\gptwo$ formula into 
an equisatisfiable first-order formula in the normal form 
(over an extended signature):
\begin{equation}
\label{eq:normal-form}
\Psi := \forall x\  \gamma(x) \quad \wedge\quad
\bigwedge_{i=1}^k \forall x \forall y\ \alpha_i(x,y)
\quad\wedge\quad
\bigwedge_{i=1}^\ell \forall x\  \big(  q_i(x)  \to  \cP_i(x)\big),
\end{equation}
where
\begin{itemize}
\item
$\gamma(x)$ is a quantifier-free formula,
\item
each $\alpha_i(x,y)$ is a quantifier-free formula of the form:
\[
(R(x,y)  \wedge  x\neq y) \ \to \ \beta(x,y)
\]
where $R(x,y)$ is an atomic formula and 
$\beta(x,y)$ is a quantifier-free formula,
\item
each $q_i(x)$ is an atomic formula,
\item
each $\cP_i(x)$ is an LP quantifier in the basic form.
\end{itemize} 
\end{lem}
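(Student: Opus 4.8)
The plan is to follow the classical Scott‑style renaming used for the guarded two‑variable fragment, preceded by one preprocessing step tailored to the ``basic form'' requirement and followed by a short case analysis. I would first dispose of the top‑level quantifier pattern: if $x$ occurs free in $\varphi$ it may stay free (the subsequent elimination algorithm tracks the type of the distinguished element anyway), and if $\varphi$ is a sentence its Boolean combination of nullary subformulas is absorbed into global flags by routine manipulations, so below I only describe the ``local'' part. The one non‑standard move is to \emph{irreflexivise} the signature up front: for every binary relation $R$ introduce a fresh unary $P_R$ and a fresh binary $R^\circ$, add the conjunct $\forall x\,\neg R^\circ(x,x)$, and replace in $\varphi$ every atom $R(x,x)$ by $P_R(x)$, $R(y,y)$ by $P_R(y)$, and $R(x,y)$, $R(y,x)$ by $R^\circ(x,y)$, $R^\circ(y,x)$. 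This is trivially equisatisfiable, and afterwards \emph{every} binary atom belongs to an irreflexive relation, so that any counting term $\#_y^{R^\circ(x,y)}[\dots]$ automatically ranges only over elements distinct from $x$ --- which is exactly what the ``$x\ne y$'' in basic form asks for. Doing this once and globally is what later avoids an exponential case split.

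Next I would do the usual renaming: repeatedly take an innermost quantified subformula $\psi$ --- a guarded $\exists$, a guarded $\forall$, or an LP quantifier --- replace its occurrences by $p_\psi(\vz)$ for a fresh predicate whose arity is the number of free variables of $\psi$, and conjoin $\forall\vz\,(p_\psi(\vz)\leftrightarrow\psi)$. In the two‑variable fragment (after deleting vacuous quantifiers) such a $\psi$ has at most one free variable, so the new predicates are unary or nullary; the process terminates in linear time and leaves each conjunct a definition $\forall\vz\,(p\leftrightarrow\chi)$ with $\chi$ either quantifier‑free (absorbed into $\gamma$) or a single depth‑one quantifier over a quantifier‑free body. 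It then remains to shape each definition, written as $p\to\chi$ together with $\chi\to p$. For $\chi=\exists y\,(R^\circ(x,y)\wedge\theta)$, the direction $\chi\to p$ is already an $\alpha$‑conjunct $(R^\circ(x,y)\wedge x\ne y)\to(\theta\to p(x))$, while for $p\to\chi$ I would introduce a fresh binary $S$ pinned down by two $\alpha$‑conjuncts to agree with $R^\circ(x,y)\wedge\theta$ off the diagonal, so that $p\to\chi$ becomes $\forall x\,(p(x)\to\#_y^{S(x,y)}[x\ne y]\ge 1)$, a basic LP quantifier guarded by the atom $p$; the inverse‑edge case $R^\circ(y,x)$ is identical after renaming the bound variable. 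A guarded $\forall$ becomes a guarded $\exists$ by complementation, using a fresh $\bar p$ with $\forall x\,(\bar p(x)\leftrightarrow\neg p(x))$. For an LP quantifier $\cP(x)=\sum_i\lambda_i\#_y^{r_i}[\theta_i]\circledast\delta$ I would, for each term, introduce a fresh binary $S_i$ agreeing with $r_i\wedge\theta_i$ off the diagonal; since the $r_i$ are irreflexive this makes $\#_y^{r_i}[\theta_i]$ equal to $\#_y^{S_i(x,y)}[x\ne y]$ at every point, so $\cP$ turns into a basic‑form $\cP^\flat$, and $p\to\cP$, $\cP\to p$ become $\forall x\,(p(x)\to\cP^\flat(x))$ and $\forall x\,(\bar p(x)\to(\neg\cP)^\flat(x))$, using that the negation of an LP quantifier is again an LP quantifier with the complementary (still admissible) relation $\circledast$. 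Collecting all $\forall x$‑conjuncts into a single $\forall x\,\gamma(x)$ yields $\Psi$.

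Every step is a single pass with no branching over cases, so the whole procedure runs in linear time. The step I expect to be the real obstacle --- the rest being standard Scott‑normalisation bookkeeping --- is the faithful translation of a \emph{general} LP quantifier (arbitrary integer, in particular negative, coefficients; modular comparisons $\equiv_d$, $\not\equiv_d$; counting along inverse edges $R(y,x)$; and arbitrary guarded formulas as the counting conditions $\varphi_i$), living over relations that may carry self‑loops, into the restricted basic form that counts only non‑self neighbours along a single relation. The device that makes this routine, and in particular that avoids a case split over the up‑to‑$n$ diagonal behaviours inside one LP quantifier, is precisely the global irreflexivisation performed at the start: after it, the conversion is merely ``introduce one fresh relation per counting term and read off the coefficients''. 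The one genuinely delicate secondary point is that taking the negation of an LP quantifier interacts with the possibility of infinite counts; I would handle this exactly as announced for Section~\ref{subsec:infty-semantics}.
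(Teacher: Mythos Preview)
The overall Scott-style renaming skeleton matches the paper's approach, but your ``irreflexivisation'' preprocessing step is not sound, and since you explicitly lean on it to dodge the diagonal case-split inside LP quantifiers, this is a genuine gap rather than a cosmetic slip.

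The problem is that a \emph{syntactic} atom $R(x,y)$ with distinct variable names can still be evaluated at a pair $(a,a)$; replacing it by $R^\circ(x,y)$ while forcing $R^\circ$ to be irreflexive changes the semantics there. Concretely, the LP term $\#_y^{R(x,y)}[x=y]$ evaluates at $a$ to $[R(a,a)]$, so a sentence asserting $\#_y^{R(x,y)}[x=y]\geq 1$ at some element is satisfiable (a single element with a self-loop does it). After your replacement it becomes $\#_y^{R^\circ(x,y)}[x=y]\geq 1$, which together with $\forall x\,\neg R^\circ(x,x)$ is flatly unsatisfiable. The same phenomenon bites ordinary guarded quantifiers: $\exists y\,(R(x,y)\wedge x=y)$ becomes $\exists y\,(R^\circ(x,y)\wedge x=y)$, again inconsistent with irreflexivity of $R^\circ$. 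So ``trivially equisatisfiable'' fails in both directions.

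The paper does \emph{not} globally irreflexivise; it handles the diagonal locally. For a guarded $\forall$ (its Case~2) it splits $\forall y\,(r(x,y)\to\varphi)$ into the self-instance $r(x,x)\to\varphi(x,x)$, absorbed into $\gamma(x)$, and the strict part $(r(x,y)\wedge x\neq y)\to(q(x)\to\varphi)$, which becomes an $\alpha$-conjunct; $\exists$ is reduced to $\forall$ by negation. For an LP quantifier (Case~1) it introduces one fresh binary $R_i$ per term, pins $R_i$ to $r_i\wedge\varphi_i$ by two guarded conjuncts, and replaces the term by $\#_y^{R_i(x,y)}[x\neq y]$. The residual diagonal contribution --- each term can be off by $[R_i(x,x)]\in\{0,1\}$ --- is exactly the obstacle you identify, and it does need care to stay linear: one cannot case-split over all $2^n$ self-loop patterns. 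A workable linear fix (compatible with the paper's algorithm, which conditions on the $1$-type $\pi$ anyway) is to encode each self-loop indicator as an \emph{additional} basic count rather than folding it into $\delta$: add a fresh binary $S_i$ with conjuncts forcing $\#_y^{S_i(x,y)}[x\neq y]$ to equal $1$ or $0$ according as $R_i(x,x)$ holds (these are two extra basic-form LP conjuncts, guarded by fresh unaries naming $R_i(x,x)$ and its complement), and replace the original term by $\#_y^{R_i(x,y)}[x\neq y]+\#_y^{S_i(x,y)}[x\neq y]$. That is linear in the number of terms; the one-element case has to be handled separately. Your global-irreflexivisation shortcut, as written, does not deliver any of this.
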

\begin{proof} 
The proof uses a routine renaming technique 
(see, e.g.,~\cite{Kazakov-gf,Pratt-Hartmann-gf}).
For completeness, we present it here.
By pushing the negation inward, we can assume that 
the input formula is in negation normal form,
i.e., every negation is applied only on atomic formulas.
Intuitively we rename each subformula of one free variable
with a fresh unary predicate.
The renaming is done bottom-up starting from the subformula
with the lowest quantifier rank.
We consider four cases.
\begin{itemize}
\item
Case 1:
A subformula in the form of an LP quantifier:
\begin{align}
\label{eq:rename-case-1}
\cP(x) \quad & := \quad
\sum_{i=1}^n\ \lambda_i\cdot \#_y^{r_i}[\varphi_i(x,y)] 
\quad \circledast \quad 
\delta
\end{align}
where each $\varphi_i(x,y)$ is quantifier-free.
\item 
Case 2: A subformula with one free variable $x$ of the form:
\begin{align}
\label{eq:rename-case-3}
\psi(x)\quad  & := \quad \varphi_1(x)\ \land\ \varphi_2(x)
\end{align}
where $\varphi_1(x)$ and $\varphi_2(x)$ are quantifier-free.
\item 
Case 3: A subformula with one free variable $x$ of the form:
\begin{align}
\label{eq:rename-case-4}
\psi(x)\quad  & := \quad \forall y\ r(x,y) \to \varphi(x,y)
\end{align}
where $\varphi(x,y)$ is quantifier-free.
\item 
Case 4: A subformula with one free variable $x$ of the form:
\begin{align}
\label{eq:rename-case-5}
\psi(x) \quad & := \quad \exists y \ r(x,y) \ \wedge \ \varphi(x,y)
\end{align}
where $\varphi(x,y)$ is quantifier-free.
\end{itemize}
All the other cases can be handled in a similar manner.

We start with Case~1.
Let $\cP(x)$ be an LP quantifier in form of~(\ref{eq:rename-case-1}).
We will first convert it to the basic form.
We introduce new binary predicates $R_1,\ldots,R_n$
and rewrite $\cP(x)$ into $\cP'(x)$ as:
\begin{align*}
\cP'(x) \quad & := \quad
\sum_{i=1}^n\ \lambda_i\cdot \#_y^{R_i(x,y)}[x=y]\ +\ 
\lambda_i\cdot \#_y^{R_i(x,y)}[x\neq y] 
\ \circledast \ \delta.
\end{align*}
It is useful to note that $\cP'(x)$ is actually equivalent to 
$\sum_{i=1}^n \lambda_i\cdot \#_y^{R_i(x,y)}[\top]\ \circledast\ \delta$.

Obviously, $\cP'(x)$ is in the basic form.
Then we replace $\cP'(x)$ with a fresh unary predicate symbol $q(x)$ and 
add the following conjunct to the original formula.
$$
\forall x \ q(x)\ \to \ \cP'(x).
$$
We further add the conjunct that asserts that 
each $R_i(x,y)$ is equivalent to $r_i(x,y)\wedge \varphi_i(x,y)$:
\begin{align*}
& \bigwedge_{i=1}^n 
\forall x \forall y \ R_i(x,y)  \to  
\big(r_i(x,y)\wedge \varphi_i(x,y)\big)
\
\wedge\ 
\bigwedge_{i=1}^n
\forall x \forall y \ r_i(x,y)  \to  (\varphi_i(x,y)\ \to \ R_i(x,y)).
\end{align*}
Finally note that each sentence of the form 
$\forall x \forall y\ S(x,y)\to \zeta(x,y)$, 
where $S(x,y)$ is atomic and $\zeta(x,y)$ is quantifier-free, 
can be rewritten into the form:
\[
\forall x \ S(x,x)\to  \zeta(x,x)
\quad\wedge\quad
\forall x \forall y \ S(x,y)\wedge x\neq y  \to  \zeta(x,y).
\]
This finishes the renaming for Case~1.

For Case~2, i.e., 
a subformula $\psi(x)$ in the form of~(\ref{eq:rename-case-3}),
we replace $\psi(x)$ with a fresh unary predicate symbol $q(x)$ and
add the following conjunct to the original formula.
$$
\forall x \ q(x) \ \to \ \big(\varphi_1(x)\ \land\ \varphi_2(x)\big).
$$
Note that $\forall x\ q(x)\to (\varphi_1(x)\wedge\varphi_2(x))$
is in the form of the first part of (3.1).
Since $\bigwedge_i\forall x \gamma_i(x)$
is equivalent to $\forall x \bigwedge_i\gamma_i(x)$,
adding such conjunct does not violate the form of (3.1).

Finally, for Cases~3 and~4, i.e., 
a subformula $\psi(x)$ in the form of~(\ref{eq:rename-case-4})
or~(\ref{eq:rename-case-5}),
we can rewrite it into an LP quantifier
as in Remark~\ref{rem:local-quantifier}
and
proceed according to Case~1.

We perform such renaming procedure repeatedly until
we obtain a $\gptwo$ sentence in normal form~(\ref{eq:normal-form}).
This completes the proof of Lemma~\ref{lem:normal-form}.
\end{proof}

\begin{rem}
We also note that if the sentence $\Psi$ in 
normal form~(\ref{eq:normal-form}) is satisfiable,
then it is satisfied by an infinite model.
Indeed, let $\cA$ be a model of $\Psi$.
We make infinitely many copies of $\cA$, denoted by 
$\cA_1,\cA_2,\ldots$,
and disjoint union them all to obtain a new model $\cB$.
For every pair $(a,b)$, where $a$ and $b$ do not come from 
the same $\cA_i$,
we set $(a,b)$ not to be in any binary relation $R^{\cB}$.
It is routine to verify that $\cB$ still satisfies $\Psi$.

We also note that there is a $\gptwo$ sentence that
is satisfied only by infinite models.
Consider the following sentence
with one unary predicate $U$ and one binary predicate $R$.
\begin{align*}
\Phi \ & := \ \forall x\  U(x) \ \wedge\
\forall x \big(U(x) \ \to \ \#^{R(x,y)}_y[\top] =2\big)
\ \wedge\
\forall x \big(U(x) \ \to \ \#^{R(y,x)}_y[\top] \leq 1\big)
\end{align*}
Intuitively, $\Phi$ states that 
every element must belong to the unary predicate $U$
and that every element has exactly two outgoing $R$-edges
and at most one incoming $R$-edge.
It is routine to verify that an infinite binary tree 
(whose vertices are all in $U$)
satisfies $\Phi$ and that every model of $\Phi$
must be infinite. 
\end{rem}

\section{The satisfiability of \texorpdfstring{$(\gptwo)$}{GP²} with arithmetic over \texorpdfstring{$\fNinf$}{𝔑∞}}
\label{sec:main}

We introduce some terminology in Section~\ref{subsec:terminology}.
Then, we show how to represent $\gptwo$ formulas 
as graphs in Section~\ref{subsec:graph}.
The algorithm is presented in Section~\ref{subsec:algo} and 
the analysis is in Section~\ref{subsec:complexity-analysis}.
In Section~\ref{subsec:comparison} 
we make a brief comparison between our algorithm and 
the approach by Bednarczyk and Fiuk~\cite{BF22}.
Throughout this section we fix a sentence $\Psi$ 
in the normal form~(\ref{eq:normal-form})
over the signature $\Sigma$.

\subsection{Terminology}
\label{subsec:terminology}

%

A \emph{unary type} (over $\Sigma$) is 
a maximally consistent set of atomic and negated atomic formulas 
using only variable $x$, including atoms such as 
$r(x,x)$ and their negations $\neg r(x,x)$.
Similarly, a \emph{binary type} is a 
maximally consistent set of binary atoms and 
negations of atoms containing $x\neq y$,
where each atom or its negation uses two variables $x$ and $y$.
The binary type that contains only the negations of 
atomic predicates from $\Sigma$ is called the \emph{null} type,
denoted by $\etanull$. 
All the other binary types are called \emph{non-null} types.

Note that we require that each atom and the negation of an atom 
in a binary type explicitly mentions both $x$ and $y$ and $x\neq y$.
This is a little different from the standard definitions, 
such as the ones defined by Gradel, et. al.~\cite{gkv} and 
Pratt-Hartmann~\cite{c2-Pratt-Hartmann05},
where a binary type may contain unary atoms or 
negations of unary atoms involving only $x$ or $y$.
The purpose of such deviation is to make the disjointness between 
the set of unary types and the set of binary types, 
which is only for technical convenience.

Note also that both unary and binary types can be identified with 
the quantifier-free formula
formed as the conjunction of its constituent formulas.
We will use the symbols $\pi$ and $\eta$ (possibly indexed) 
to denote unary and binary types, respectively.
When viewed as formulas, we write $\pi(x)$ and $\eta(x,y)$, respectively.
We write $\pi(y)$ to denote the formula $\pi(x)$ with 
$x$ being substituted with $y$.
We denote by $\overline{\eta}(x,y)$ the ``reverse'' of $\eta(x,y)$,
i.e., the binary type obtained by swapping the variables $x$ and $y$.
Let $\Pi$ denote the set of all unary types over $\Sigma$
and let $\cK$ be the set of all {\em non-null} binary types 
over $\Sigma$.

For a structure $\cA$ (over $\Sigma$), 
the \emph{type of an element} $a \in A$ is the unique unary type $\pi$ that $a$ satisfies in $\cA$.
Similarly, the {\em type of a pair} $(a,b)\in A\times A$, where $a\neq b$, 
is the unique binary type that $(a,b)$ satisfies in $\cA$.
The {\em configuration} of a pair $(a,b)$ is the tuple $(\pi,\eta,\pi')$
where $\pi$ and $\pi'$ are the types of $a$ and $b$, respectively,
and $\eta$ is the type of $(a,b)$.
It is a non-null configuration when $\eta$ is a non-null type.
In this case we will also say that $b$ is a $(\eta,\pi')$-neighbour of $a$ in the structure $\cA$.
When $(\eta,\pi')$ is clear from the context,
we will simply say that $b$ is a {\em neighbour} of $a$.
The set of all neighbours of $a$ is called 
the {\em neighbourhood} of $a$.

We say that a unary type $\pi$ is {\em realised} in $\cA$,
if there is an element whose type is $\pi$.
Likewise, a configuration $(\pi,\eta,\pi')$ is {\em realised} in $\cA$,
if there is a pair $(a,b)$ whose configuration is $(\pi,\eta,\pi')$.

\paragraph*{The graph representation.}
In this paper the term graph means \emph{finite edge-labelled directed} graph $G=(V,E)$,
where $V\subseteq \Pi$ and $E \subseteq V\times \cK\times V$.
We can think of an edge $(\pi,\eta,\pi')$
as a potential configuration for a pair in a structure.
We allow multiple edges between two vertices provided
that they have different labels.
Similarly, we also allow multiple self-loops on a vertex provided
that they have different labels.

Let $\cN_G(\pi)$ denote the set $\{(\eta,\pi') | (\pi,\eta,\pi')\in E\}$,
i.e., the set of all the edges going out from $\pi$.
A graph $H=(V',E')$ is a \emph{subgraph} of $G=(V,E)$
if $V'\subseteq V$ and $E'\subseteq E \cap (V'\times \cK\times V')$.
If $V'\neq \emptyset$, we call $H$ a \emph{non-empty subgraph} of $G$.

Definition~\ref{def:conform} below provides the link
between structures and graphs.

\begin{defi}
\label{def:conform}
A structure $\cA$ \emph{conforms} to a graph $G$,
if all of the following conditions hold.
\begin{itemize}
\item
If a type $\pi$ is realised in $\cA$, then $\pi$ is a vertex in $G$.
\item
If a configuration $(\pi,\eta,\pi')$ is realised in $\cA$, 
where $\eta$ is a non-null type,
then $(\pi,\eta,\pi')$ is an edge in $G$.
\end{itemize}
\end{defi}

Next, we show how the sentence $\Psi$ can be represented as a graph.
Recall that $\Psi$ is a $\gptwo$ sentence in the normal form~(\ref{eq:normal-form}).
We need the following two definitions.

\begin{defi}
\label{def:compatible-vertex}
A unary type $\pi$ is \emph{compatible with $\Psi$}, if $\pi(x) \models \gamma(x)$.
\end{defi}

\begin{defi}
\label{def:compatible-edge}
A configuration $(\pi,\eta,\pi')$ is \emph{compatible with $\Psi$}, if 
both $\pi$ and $\pi'$ are compatible with $\Psi$
and for each $1\leq i\leq k$:
\[
\pi(x)\wedge \eta(x,y)\wedge \pi'(y) \models \alpha_i(x,y)
\quad\text{and}\quad
\pi(y)\wedge \overline{\eta}(x,y)\wedge \pi'(x)  \models \alpha_i(x,y).
\] 
\end{defi}

Recall that for each unary type $\pi$,
for each predicate $U(x)$,
exactly one of $U(x)$ or $\neg U(x)$ belongs to $\pi$.
Thus, to determine whether $\pi(x)\models \gamma(x)$,
it suffices to assign each atom $U(x)$ in $\gamma(x)$ 
according to the type $\pi$,
i.e., we assign an atom $U(x)$ with true, if $U(x)$ belongs to $\pi$
and false, otherwise,
and evaluate the boolean value of $\gamma(x)$ according to 
the standard semantics of $\wedge$, $\vee$ and $\neg$.
Therefore, determining whether $\pi$ is compatible with $\Psi$ 
can be done in polynomial time in the size of $\pi$ and 
the length of $\Psi$.
Similarly, since a configuration $(\pi,\eta,\pi')$ 
determines the truth value of each atom in each $\alpha_i(x,y)$,
determining whether it is compatible with a formula $\Psi$
can be done in polynomial time in the size of $\pi,\eta,\pi'$ 
and the length of $\Psi$.
Thus, listing all compatible unary types and configurations 
takes exponential time in the~length~of~$\Psi$.

The sentence $\Psi$ defines a directed graph $G_{\Psi}$,
where the vertices are the unary types that are compatible with $\Psi$
and the edges are $(\pi,\eta,\pi')$, for every $(\pi,\eta,\pi')$ compatible with~$\Psi$.
Note that the graph $G_{\Psi}$ is ``symmetric'' in the sense that
$(\pi,\eta,\pi')$ is an edge if and only if $(\pi',\overline{\eta},\pi)$ is an edge.

Intuitively, the vertices in the graph $G_{\Psi}$ 
are the unary types  that do not violate
$\forall x \ \gamma(x)$ and 
the edges are the binary types that do not violate
the conjunct
$\bigwedge_{i=1}^k \forall x \forall y \ \alpha_i(x,y)$.
Note that if a structure $\cA$ satisfies $\Psi$,
then it is necessary that $\cA$ conforms to the graph $G_{\Psi}$.
In the next section, we will show how to analyse the graph $G_{\Psi}$
to infer whether the conjunct $\bigwedge_{i=1}^{\ell} \forall x\ q_i(x) \to \cP_i(x)$
can also be satisfied.


\subsection{The characterisation of the satisfiability of \texorpdfstring{$\Psi$}{Ψ}}
\label{subsec:graph}

Recall that $\Psi$ is a sentence in normal form~(\ref{eq:normal-form}).
For each $1\leq i \leq \ell$, let the LP quantifier $\cP_i(x)$ be:
\begin{align*}
\cP_i(x) \quad & := \quad
\sum_{j=1}^{t_i} \lambda_{i,j}\cdot \#_y^{R_{i,j}(x,y)}[x\neq y]\ +\ \sum_{j=1}^{t_i'}\lambda_{i,j}'\cdot \#_y^{R_{i,j}'(x,y)}[x= y] 
\quad \circledast_i \quad
\delta_i.
\end{align*}
For a graph $G$, a vertex $\pi$ in $G$
and $1\leq i \leq \ell$,
we define the linear constraint $\cQ_i^{G,\pi}$:
\begin{align*}
\cQ_i^{G,\pi} \quad & := \quad
\sum_{j=1}^{t_i}\lambda_{i,j}\cdot 
\Bigg(\sum_{  \substack{(\eta',\pi')\in \cN_G(\pi)\\ \text{and} \ R_{i,j}(x,y)\in \eta'}}\
z_{\eta',\pi'}\Bigg) 
\ + \
\sum_{j=1}^{t_i'}\ \lambda_{i,j}'\cdot \chi_{i,j}
\quad \circledast_i \quad \delta_i,
\end{align*}
where $\chi_{i,j}$ is $1$, if $R_{i,j}'(x,x)$ is in $\pi$ and $0$, otherwise.

The variables in $\cQ_i^{G,\pi}$ are $z_{\eta',\pi'}$, for every $(\eta',\pi')\in \cK\times\Pi$.
Intuitively, each $z_{\eta',\pi'}$ represents 
the number of $(\eta',\pi')$-neighbours of an element 
with type $\pi$.
Since every element and every pair of elements has a unique unary
and binary type,
we can partition the neighbourhood of each element according to the unary and binary types.
This is the reason each $\#_{y}^{R_{i,j}(x,y)}[x\neq y]$ is replaced with the sum:
\[
\sum_{  \substack{(\eta',\pi')\ \in\ \cN_G(\pi)\\ \text{and} \ \ R_{i,j}(x,y)\ \in\ \eta'}}\ z_{\eta',\pi'}.
\]
The coefficient $\chi_{i,j}$ indicates whether an element with type $\pi$
has a $R_{i,j}'$-loop to itself.
We formalise this intuition in Lemma~\ref{lem:single-constraint}.

\begin{lem}
\label{lem:single-constraint}
Let $G$ be a graph and $\cA$ be a structure that conforms to $G$.
Let $1 \leq i \leq \ell$.
Suppose there is an element $a$ in $\cA$ whose type is $\pi$
and $\cA,x/a\models \cP_i(x)$.
Then, $\cQ_i^{G,\pi}$ admits a solution in $\fNinf$.
\end{lem}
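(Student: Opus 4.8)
The plan is to exhibit an explicit solution to $\cQ_i^{G,\pi}$ read off directly from the witness element $a$ in $\cA$. First I would define, for each pair $(\eta',\pi')\in\cK\times\Pi$, the value
\[
z_{\eta',\pi'} \ :=\ \big|\{\,b\in A \ :\ \text{the configuration of } (a,b) \text{ is } (\pi,\eta',\pi')\,\}\big|,
\]
i.e.\ the number of neighbours $b$ of $a$ such that the binary type of $(a,b)$ is $\eta'$ and the unary type of $b$ is $\pi'$. These are natural numbers (finite, since $\cP_i(x)$ holding at $a$ requires each $\#^{R_{i,j}(x,y)}_y[x\neq y]$ to be finite, and this counts a subset of the relevant neighbours). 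I would then check that substituting these values into the left-hand side of $\cQ_i^{G,\pi}$ yields exactly the value of the left-hand side of $\cP_i(x)$ evaluated at $a$ in $\cA$, term by term in $j$.

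The core of the argument is the identity, for each fixed $j$ with $1\le j\le t_i$,
\[
\big|R_{i,j}(x,y)\wedge x\neq y\big|^{x/a}_{\cA}
\ =\
\sum_{\substack{(\eta',\pi')\in\cN_G(\pi)\\ R_{i,j}(x,y)\in\eta'}} z_{\eta',\pi'}.
\]
To see this, note that the set counted on the left is $\{b\in A : b\neq a,\ \cA\models R_{i,j}(a,b)\}$. Partition this set according to the configuration of the pair $(a,b)$. Every such $b$ is a genuine neighbour of $a$ (the binary type of $(a,b)$ contains $R_{i,j}$, hence is non-null), so its configuration is $(\pi,\eta',\pi')$ for a unique $\eta'\in\cK$ and $\pi'\in\Pi$; and $\cA\models R_{i,j}(a,b)$ holds precisely when $R_{i,j}(x,y)\in\eta'$. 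Moreover, since $\cA$ conforms to $G$ (Definition~\ref{def:conform}), any realized configuration $(\pi,\eta',\pi')$ with $\eta'$ non-null is an edge of $G$, i.e.\ $(\eta',\pi')\in\cN_G(\pi)$, so the pairs $(\eta',\pi')$ over which the sum ranges exactly capture all nonempty blocks of the partition. Hence the left-hand count equals the sum of the block sizes $z_{\eta',\pi'}$ over the indicated range, which is the displayed identity.

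Multiplying this identity by $\lambda_{i,j}$ and summing over $j$ shows that the left-hand side of $\cQ_i^{G,\pi}$, under the assignment $z_{\eta',\pi'}$ defined above, equals $\sum_{j=1}^{t_i}\lambda_{i,j}\cdot\big|R_{i,j}(x,y)\wedge x\neq y\big|^{x/a}_{\cA}$, which is the quantity that the semantics of $\cP_i(x)$ requires to stand in relation $\circledast_i$ with $\delta_i$. Since $\cA,x/a\models\cP_i(x)$, that relation holds, so the assignment satisfies $\cQ_i^{G,\pi}$, and it is a solution in $\bbN$ as required. The only mild subtlety — and the one step worth stating carefully rather than waving through — is the use of the conformance hypothesis to ensure that no neighbour of $a$ contributes a configuration that fails to be an edge of $G$; without it the sum in $\cQ_i^{G,\pi}$ could omit some actual neighbours and the identity would break.
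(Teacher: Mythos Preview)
Your argument is correct and follows exactly the same approach as the paper: define the candidate solution by counting, for each $(\eta',\pi')$, the neighbours of $a$ realizing the configuration $(\pi,\eta',\pi')$, then use the partition of the neighbourhood by configurations to match each term $\#_y^{R_{i,j}(x,y)}[x\neq y]$ with the corresponding inner sum in $\cQ_i^{G,\pi}$. Your write-up is in fact more explicit than the paper's own proof, which simply asserts that the assignment works; in particular, you correctly isolate the role of conformance in ensuring that every realized non-null configuration lies in $\cN_G(\pi)$.
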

\begin{proof}
Let $G$, $\cA$, $a$ and $\pi$ be as in the hypothesis of the lemma.
For every $(\eta',\pi')\in \cN_G(\pi)$, 
let $\cD_{\eta',\pi'}$ be the set of $(\eta',\pi')$-neighbours of $a$
in $\cA$.

Note that for every neighbour $b$ of $a$,
there is exactly one $(\eta',\pi')\in \cN_G(\pi)$ 
where $b\in \cD_{\eta',\pi'}$.
In other words,
the sets $\cD_{\eta',\pi'}$'s partition the neighbourhood of $a$.
Since $\cA,x/a\models \cP_i(x)$, by the semantics of the LP quantifier,
the following holds.
\[
\sum_{j=1}^{t_i}\ \lambda_{i,j}\cdot |R_{i,j}(x,y)\wedge x\neq y|^{x/a}_{\cA} 
\ + \
\sum_{j=1}^{t_i'}\ \lambda_{i,j}'\cdot |R_{i,j}'(x,y)\wedge x=y|^{x/a}_{\cA}
\quad \circledast_{i} \quad \delta_i.
\]
Observe also that:
\[
|R_{i,j}(x,y)\wedge x\neq y|^{x/a}_{\cA} \quad = \quad \sum_{  \substack{(\eta',\pi')\in \cN_G(\pi)\\ \text{and} \ R_{i,j}(x,y)\in \eta'}}\
|\cD_{\eta',\pi'}|.
\]
Furthermore, each $|R_{i,j}'(x,y)\wedge x=y|^{x/a}_{\cA}$ is $1$, if $R_{i,j}'(x,x)$ is in $\pi$;
and $0$, otherwise.
Hence, $|R_{i,j}'(x,y)\wedge x=y|^{x/a}_{\cA}$ is precisely the definition of $\chi_{i,j}$.
Thus, the assignment $z_{\eta',\pi'}\mapsto |\cD_{\eta',\pi'}|$ 
for each $(\eta',\pi')\in \cN_G(\pi)$ 
is a solution to the linear constraint $\cQ_i^{G,\pi}$.
\end{proof}

Next, we define a system of linear constraints that captures whether a certain configuration can be realised.
\begin{defi}
\label{def:edge-system}
For an edge $(\pi_1,\eta,\pi_2)$ in a graph $G$,
let $\cZ_{\pi_1,\eta,\pi_2}^{G}$ be the following system of linear constraints:
\[
z_{\eta,\pi_2}  \geq  1 \ \wedge 
\bigwedge_{i \ \text{s.t.}\ q_i(x) \in \pi_1} \cQ_i^{G,\pi_1}
\]
\end{defi}

Note that $\cZ_{\pi_1,\eta,\pi_2}^{G}$ contains only the linear constraint $\cQ_i^{G,\pi_1}$
when the unary predicate $q_i(x)$ belongs to $\pi_1$.
The intuitive meaning of $\cZ_{\pi_1,\eta,\pi_2}^{G}$ is as follows.
If it does not admit a solution in $\fNinf$,
then the configuration $(\pi_1,\eta,\pi_2)$
is not realised in any model of $\Psi$.
This is because either $z_{\eta,\pi_2}$ must be zero,
or $\cQ_i^{G,\pi_1}$ is violated for some $i$ where $q_i(x)\in \pi_1$.
Its formalisation is stated as Lemma~\ref{lem:system}.

\begin{lem}
\label{lem:system}
Let $G$ be a graph and $\cA$ be a structure that conforms to $G$.
Suppose there is a pair $(a,b)$ in $\cA$ whose configuration is $(\pi_1,\eta,\pi_2)$
and that $\cA,x/a\models \bigwedge_{i=1}^{\ell} (q_i(x) \to \cP_i(x))$.
Then, the system $\cZ_{\pi_1,\eta,\pi_2}^{G}$ admits a solution in $\fNinf$.
\end{lem}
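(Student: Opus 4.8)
The plan is to mimic the proof of Lemma~\ref{lem:single-constraint}, but now simultaneously witnessing both that the constraint $\cZ_{\pi_1,\eta,\pi_2}^{G}$ is about the element $a$ and that the edge $(\pi_1,\eta,\pi_2)$ is actually used at $a$. Concretely, fix $G$, $\cA$, and the pair $(a,b)$ whose configuration is $(\pi_1,\eta,\pi_2)$, as in the hypothesis. For every $(\eta',\pi')\in\cN_G(\pi_1)$ define, exactly as before,
\[
K_{\eta',\pi'} \ = \ \big|\{c\ :\ \text{the configuration of}\ (a,c)\ \text{is}\ (\pi_1,\eta',\pi')\}\big|,
\]
and propose the valuation $z_{\eta',\pi'} := K_{\eta',\pi'}$ for every $(\eta',\pi')\in\cK\times\Pi$ (setting $z_{\eta',\pi'}=0$ when $(\eta',\pi')\notin\cN_G(\pi_1)$, which is consistent since $\cA$ conforms to $G$, so no configuration outside $G$ is realized at $a$). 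I would then check the two halves of Definition~\ref{def:edge-system} against this valuation.

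First I would verify $z_{\eta,\pi_2}\geq 1$. Since $(a,b)$ has configuration $(\pi_1,\eta,\pi_2)$ and $\eta$ is non-null (it is an edge of $G$, hence in $\cK$), the element $b$ is counted in $K_{\eta,\pi_2}$, so $z_{\eta,\pi_2}=K_{\eta,\pi_2}\geq 1$. Second, for each $i$ with $q_i(x)\in\pi_1$, I would show $\cQ_i^{G,\pi_1}$ is satisfied. Because $a$ has type $\pi_1$ and $q_i(x)\in\pi_1$, the hypothesis $\cA,x/a\models\bigwedge_{i=1}^{\ell}(q_i(x)\to\cP_i(x))$ gives $\cA,x/a\models\cP_i(x)$; this is precisely the situation of Lemma~\ref{lem:single-constraint} with the same element $a$ and type $\pi_1$, so the very same valuation $z_{\eta',\pi'}=K_{\eta',\pi'}$ solves $\cQ_i^{G,\pi_1}$. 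Since both conjuncts hold, the valuation solves $\cZ_{\pi_1,\eta,\pi_2}^{G}$, and all the $K_{\eta',\pi'}$ are natural numbers, so the solution lies in $\bbN$.

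The only points requiring a little care — and where I expect the ``obstacle'', such as it is — are bookkeeping issues rather than anything deep: making sure the partition claim (each neighbour $c$ of $a$ falls in exactly one $K_{\eta',\pi'}$ with $(\eta',\pi')\in\cN_G(\pi_1)$) is legitimate, which uses that every pair has a unique binary type and every element a unique unary type, together with the conformance of $\cA$ to $G$; and confirming that $\cP_i(x)$ in the normal form is in basic form, so that substituting each $\#_y^{R_{i,j}(x,y)}[x\neq y]$ by $\sum_{(\eta',\pi')\in\cN_G(\pi_1),\,R_{i,j}\in\eta'} z_{\eta',\pi'}$ correctly recomputes $|R_{i,j}(x,y)\wedge x\neq y|^{x/a}_{\cA}$ under the valuation. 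Both of these were already argued in the discussion preceding Lemma~\ref{lem:single-constraint}, so the proof is essentially a two-line reduction: invoke Lemma~\ref{lem:single-constraint} for the $\cQ_i^{G,\pi_1}$ part and add the observation that the witness $b$ forces $z_{\eta,\pi_2}\geq 1$.
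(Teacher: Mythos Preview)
Your proposal is correct and follows essentially the same approach as the paper: define $K_{\eta',\pi'}$ as the count of neighbours of $a$ with configuration $(\pi_1,\eta',\pi')$, observe that the witness $b$ forces $K_{\eta,\pi_2}\geq 1$, and use the hypothesis $\cA,x/a\models\bigwedge_i(q_i(x)\to\cP_i(x))$ to conclude that the assignment $z_{\eta',\pi'}=K_{\eta',\pi'}$ solves each $\cQ_i^{G,\pi_1}$ with $q_i(x)\in\pi_1$. The paper's proof is slightly terser (it does not explicitly invoke Lemma~\ref{lem:single-constraint} but re-derives the same conclusion in one line), yet the argument is identical.
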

\begin{proof}
Let $G$, $\cA$, $a$, $b$ and $(\pi_1,\eta,\pi_2)$ be as in the hypothesis.
Using the same notation in Lemma~\ref{lem:single-constraint},
let $\cD_{\eta',\pi'}$ denote 
the set of $(\eta',\pi')$-neighbours of $a$,
for every $(\eta',\pi') \in \cK\times \Pi$.
Since the configuration of $(a,b)$ is $(\pi_1,\eta,\pi_2)$,
we have $\cD_{\eta,\pi_2}\neq \emptyset$ and hence,
$|\cD_{\eta,\pi_2}|\geq 1$.
Since $\cA,x/a\models \bigwedge_{i=1}^{\ell} q_i(x)\to\cP_i(x)$,
it is immediate that
the assignment $z_{\eta',\pi'}\mapsto |\cD_{\eta',\pi'}|$ 
is a solution to the system $\cZ_{\pi_1,\eta,\pi_2}^G$.
\end{proof}

To infer the satisfiability of $\Psi$ from the graph $G_{\Psi}$,
we will need some more terminology.

\begin{defi}
\label{def:good-edge}
An edge $(\pi_1,\eta,\pi_2)$ is a {\em bad} edge in a graph $G$ (w.r.t. the sentence $\Psi$),
if the system $\cZ_{\pi_1,\eta,\pi_2}^{G}$ does not admit a solution in $\fNinf$.
\end{defi}

Note that by Lemma~\ref{lem:system},
if $(\pi_1,\eta,\pi_2)$ is a bad edge in $G$,
then there is no model $\cA$ that conforms to $G$ such that
the configuration $(\pi_1,\eta,\pi_2)$ is realized in $\cA$
and that $\cA \models \forall x \ (q_i(x)\to \cP_i(x))$ for every $1\leq i\leq \ell$.

Next, we define its analogue for the vertices in $G$.

\begin{defi}
\label{def:good-vertex}
A vertex $\pi$ is a {\em bad} vertex in $G$ (w.r.t. the sentence $\Psi$),
if all of the following conditions hold.
\begin{itemize}
\item
It does not have any outgoing edge in $G$.
\item
There is $1\leq i \leq \ell$ such that $\pi$ contains $q_i(x)$,
but the system $\cQ_i^{G,\pi}$ does not admit the zero solution,
i.e., the solution where all the variables are assigned with zero.
\end{itemize}
\end{defi}

The intended meaning of a bad vertex $\pi$ is that
there cannot be a model $\cA$ of $\Psi$ in which
$\pi$ is realised.
We are now ready for the final terminology 
which will be crucial for deciding the satisfiability of $\Psi$.

\begin{defi}
\label{def:good-subgraph}
Let $G$ be a graph and $H$ be a non-empty subgraph of $G$.
We say that $H$ is a {\em good} subgraph of $G$, 
if all of the following conditions hold.
\begin{itemize}
\item
There is no bad vertex and no bad edge in $H$ (w.r.t. the sentence $\Psi$).
\item
It is symmetric in the sense that $(\pi,\eta,\pi')$ is an edge in $H$ 
if and only if 
$(\pi',\overline{\eta},\pi)$ is an edge in $H$.
\end{itemize}
\end{defi}

Theorem~\ref{theo:sat-good-subgraph} states that
the satisfiability of $\Psi$ is equivalent to
the existence of a good subgraph in $G_{\Psi}$.

\begin{thm}
\label{theo:sat-good-subgraph}
Let $\Psi$ be a $\gptwo$ sentence in normal form~(\ref{eq:normal-form}).
Then, $\Psi$ is satisfiable if and only if 
there is a good subgraph in $G_{\Psi}$.
\end{thm}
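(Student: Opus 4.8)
The plan is to prove the two directions of the equivalence separately.

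For the easy direction ($\Psi$ satisfiable $\Rightarrow$ good subgraph exists), suppose $\cA \models \Psi$. Without loss of generality (by the remark following Lemma~\ref{lem:normal-form}) we may take $\cA$ to be infinite, or at least we work with whatever model we have. Let $H$ be the subgraph of $G_\Psi$ consisting of exactly the unary types realized in $\cA$ and exactly the configurations $(\pi,\eta,\pi')$ with $\eta$ non-null that are realized in $\cA$. Since $\cA \models \forall x\, \gamma(x) \wedge \bigwedge_i \forall x\forall y\, \alpha_i(x,y)$, every realized type is compatible with $\Psi$ and every realized non-null configuration is compatible with $\Psi$, so $H$ is indeed a subgraph of $G_\Psi$; it is non-empty because $\cA$ is non-empty. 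Symmetry of $H$ is immediate: if $(a,b)$ realizes $(\pi,\eta,\pi')$ then $(b,a)$ realizes $(\pi',\overline\eta,\pi)$. It remains to check there are no bad vertices or bad edges in $H$. For a bad edge: if $(\pi_1,\eta,\pi_2)$ is an edge of $H$, pick a pair $(a,b)$ in $\cA$ realizing it; since $\cA,x/a \models \bigwedge_{i=1}^\ell (q_i(x)\to\cP_i(x))$, Lemma~\ref{lem:system} (applied with $G := H$, noting $\cA$ conforms to $H$) gives a solution to $\cZ^{H}_{\pi_1,\eta,\pi_2}$, so the edge is not bad. For a bad vertex: if $\pi$ is a vertex of $H$ with no outgoing edge in $H$, pick $a$ of type $\pi$; then $a$ has no neighbour in $\cA$, so every $|R_{i,j}(x,y)\wedge x\neq y|^{x/a}_{\cA} = 0$, and since $\cA,x/a\models q_i(x)\to\cP_i(x)$, whenever $q_i(x)\in\pi$ the all-zero assignment satisfies $\cQ_i^{H,\pi}$; hence $\pi$ is not bad.

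For the hard direction (good subgraph $\Rightarrow$ $\Psi$ satisfiable), let $H$ be a good subgraph of $G_\Psi$; the task is to build a model. This is the main obstacle, because the LP constraints are genuinely counting constraints and the solutions to the various systems $\cZ^{H}_{\pi_1,\eta,\pi_2}$ must be stitched together consistently: when I decide that an element of type $\pi$ gets $z_{\eta,\pi'}$ many $\eta$-edges to type-$\pi'$ elements, I am committing those target elements to receive an incoming $\overline\eta$-edge, and the receiving side also has its own LP constraints to honour. The standard remedy is a "biregular"/blow-up construction: work in an infinite (or suitably large) universe with infinitely many copies of each vertex type appearing in $H$, and realize edges in a homogeneous, highly symmetric way so that counting only depends on types, not on individual elements. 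Concretely, I would proceed vertex-type by vertex-type: for each vertex $\pi$ of $H$ that carries some $q_i$, fix a solution in $\bbN$ of $\bigwedge_{i: q_i(x)\in\pi}\cQ_i^{H,\pi}$ together with $z_{\eta,\pi'}\geq 1$ for at least one outgoing edge if one exists (this uses that $H$ has no bad vertex when $\pi$ has no outgoing edge, and no bad edge otherwise, so such solutions exist); then build the model so that every element of type $\pi$ has exactly that many $\eta$-edges to type-$\pi'$ elements. Because $H$ is symmetric, an $\eta$-edge from $\pi$ to $\pi'$ is simultaneously an $\overline\eta$-edge from $\pi'$ to $\pi$, and one must arrange the bipartite-like incidences between the (infinite) pool of type-$\pi$ elements and the pool of type-$\pi'$ elements so both out-degree prescriptions are met — with infinitely many elements of each type this is always doable (e.g.\ by a back-and-forth or round-robin pairing, or by taking the edge relation between the two pools to be a disjoint union of appropriately sized complete bipartite pieces), and with a finite model it requires matching up the out-degree counts, which is where the extra care (and possibly scaling all solutions to a common multiple) goes in.

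Having constructed such a structure $\cB$, I then verify $\cB \models \Psi$ conjunct by conjunct. The conjunct $\forall x\,\gamma(x)$ holds because every realized type is a vertex of $H\subseteq G_\Psi$, hence compatible with $\Psi$, i.e.\ entails $\gamma$. Each $\forall x\forall y\,\alpha_i(x,y)$ holds because: for pairs $(a,b)$ with $a=b$ the formula $\alpha_i$ has the guarded shape $(R(x,y)\wedge x\neq y)\to\beta$ and is vacuously true; for $a\neq b$ in distinct type-pools the binary type is null so again $R(x,y)$ fails for the relevant guard; and for $a\neq b$ that were linked, their configuration $(\pi,\eta,\pi')$ is an edge of $H\subseteq G_\Psi$, hence compatible with $\Psi$, so $\pi(x)\wedge\eta(x,y)\wedge\pi'(y)\models\alpha_i(x,y)$ by Definition~\ref{def:compatible-edge} (and the reversed entailment handles $(b,a)$). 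Finally, each $\forall x\,(q_i(x)\to\cP_i(x))$: take $a$ with $q_i(x)\in\pi$ where $\pi$ is $a$'s type; by construction the number of $R_{i,j}$-neighbours of $a$ of each type $(\eta',\pi')$ equals the chosen solution value $z_{\eta',\pi'}$, and since each $\#_y^{R_{i,j}(x,y)}[x\neq y]$ evaluates (by the partition-of-neighbourhood argument behind Lemma~\ref{lem:single-constraint}) to $\sum_{(\eta',\pi')\in\cN_H(\pi),\ R_{i,j}\in\eta'} z_{\eta',\pi'}$, the chosen solution of $\cQ_i^{H,\pi}$ witnesses $\cA,x/a\models\cP_i(x)$; all these counts are finite by construction. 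This closes the argument; the only delicate point, as noted, is the edge-stitching step, which I expect to handle via an infinite homogeneous model so that all counting is type-determined.
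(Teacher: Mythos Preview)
Your ``only if'' direction is correct and matches the paper's argument essentially verbatim.

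The ``if'' direction, however, has a genuine gap. You propose a \emph{homogeneous} model: fix one solution vector $(z^{(\pi)}_{\eta',\pi'})$ per vertex type~$\pi$, then realise every element of type~$\pi$ with exactly that neighbourhood profile, stitching the two sides together biregularly. The sentence ``with infinitely many elements of each type this is always doable'' is false in the case $z^{(\pi)}_{\eta,\pi'}>0$ while $z^{(\pi')}_{\overline\eta,\pi}=0$: every $\pi$-element emits an $\eta$-edge into the $\pi'$-pool, but each $\pi'$-element is prescribed to have \emph{zero} $\overline\eta$-edges back to~$\pi$, so no target can accept the edge. You give no argument that the per-type solutions can be chosen to avoid all such $0$/positive mismatches, and in fact they sometimes cannot. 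Take $H$ with vertices $\pi_1,\pi_2,\pi_3$, edges $(\pi_1,\eta_a,\pi_2)$, $(\pi_2,\eta_b,\pi_3)$ and their reverses, with the LP constraints forcing $z^{(\pi_1)}_{\eta_a,\pi_2}=1$, $z^{(\pi_3)}_{\overline{\eta_b},\pi_2}=1$, and $z^{(\pi_2)}_{\overline{\eta_a},\pi_1}+z^{(\pi_2)}_{\eta_b,\pi_3}=1$. Every edge is good (each of the two $\pi_2$-solutions witnesses one of the two outgoing edges), so $H$ is good; but any single choice for $\pi_2$ zeroes out one coordinate and creates a $0$/positive conflict with either $\pi_1$ or $\pi_3$. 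So the homogeneous scheme cannot produce a model here, even though $\Psi$ is satisfiable.

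The paper sidesteps this entirely by \emph{not} insisting on homogeneity. It enumerates the (countably many) elements $a_1,a_2,\ldots$ over infinite disjoint pools $A_\pi$ and processes them in order, maintaining the invariant that each not-yet-processed element has \emph{at most one} already-assigned non-null neighbour. When $a_j$ of type $\pi_1$ is processed, that single incoming configuration $(\pi_1,\eta,\pi_2)$ (if any) tells you exactly which system $\cZ^{H}_{\pi_1,\eta,\pi_2}$ to solve; the solution (with the $z_{\eta,\pi_2}$ entry decremented by~$1$ to account for the existing neighbour) is then realised using \emph{fresh} elements far out in the enumeration, each of which thereby acquires its one allowed earlier neighbour. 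Thus different elements of the same type may use different solutions --- in the example above, some $\pi_2$-elements satisfy their constraint via the $\pi_1$-side and others via the $\pi_3$-side, depending on who their parent was. This tree-like ``fresh neighbours'' mechanism is the missing idea in your sketch; without it, the stitching step does not go through.
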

\begin{proof}  {\bf (only if)} Let $\cA\models \Psi$.
Let $H$ be the graph where
the vertices are the unary types realised in $\cA$ and 
the edges are the configurations realised in $\cA$.
Obviously, $H$ is symmetric and a non-empty subgraph of $G_{\Psi}$
such that $\cA$ conforms to $H$.
It remains to show that there is no bad edge and no bad vertex in $H$.

Let $(\pi_1,\eta,\pi_2)$ be an edge in $H$.
By definition, there is a pair $(a,b)$ in $\cA$ whose configuration is 
$(\pi_1,\eta,\pi_2)$.
Since $\cA\models \Psi$, we have 
$\cA,x/a  \models  \bigwedge_{i=1}^{\ell} q_i(x) \to \cP_i(x)$.
By Lemma~\ref{lem:system}, the system $\cZ_{\pi_1,\eta,\pi_2}^{H}$ 
admits a solution in $\fNinf$,
and hence, by definition, $(\pi_1,\eta,\pi_2)$ is not a bad edge in $H$.

Next, we show that $H$ does not contain any bad vertex.
Assume to the contrary that there is a bad vertex $\pi$ in $H$.
By definition, $\pi$ does not have any outgoing edge in $H$.
By the construction of $H$,
there is an element $a$ in $\cA$ whose type is $\pi$
and for every relation $R_{i,j}(x,y)$, 
we have $
\big|R_{i,j}(x,y)\wedge x\neq y\big|^{x/a}_{\cA} = 0$.
Since $\cA\models \Psi$,
we have $\cA,x/a\models q_i(x)\to \cP_i(x)$ 
for every $1\leq i \leq \ell$.
In particular, for every $1\leq i \leq \ell$ such that 
$\pi$ contains $q_i(x)$, 
the system $\cQ_i^{H,\pi}$ admits the zero solution.
This contradicts the assumption that $\pi$ is a bad vertex.

{\bf (if)} Let $H=(V,E)$ be a good subgraph of $G_{\Psi}$.
We will show how to construct a model $\cA\models \Psi$ 
that conforms to $H$.
Let $\Psi$ be a $\gptwo$ sentence as in Eq.~(\ref{eq:normal-form}).
We consider two cases.

\underline{\em Case 1}:
There is a vertex $\pi\in V$ that does not have any outgoing edge in $H$.

Let $\cA$ be the structure that contains only one element
whose unary type is $\pi$.
Since $H$ is a subgraph of $G_{\Psi}$,
the unary type $\pi$ is compatible with $\Psi$,
i.e., $\pi(x)\models \gamma(x)$.
Hence, $\cA$ satisfies the part $\forall x \ \gamma(x)$.
Since $\cA$ contains only one element,
it trivially satisfies the part
$\bigwedge_{i=1}^k \forall x \forall y\ \alpha_i(x,y)$.

We now show that $\cA$ satisfies the third part
$\bigwedge_{i=1}^\ell \forall x\  q_i(x)  \to  \cP_i(x)$.
Note that $\pi$ is not a bad vertex since 
$H$ is a good subgraph of $G_{\Psi}$.
Since $\pi$ does not have any outgoing edge,
by definition, for every $1\leq i \leq \ell$,
where $\pi$ contains $q_i(x)$,
the system $\cQ_i^{G,\pi}$ admits the zero solution.
Since there is only one element in the structure $\cA$,
it has no neighbours and hence,
the structure $\cA$ satisfies the part
$\bigwedge_{1\leq i \leq \ell}
\forall x\  q_i(x)  \to  \cP_i(x)$.

\underline{\em Case 2}: 
Every vertex in $V$ has at least one outgoing edge in $H$.

We will build a tree-like model structure $\cA$ that satisfies $\Psi$.
In the construction of the tree,
the term \emph{$(\eta',\pi')$-children} of a node $a$
means the children of $a$ with unary type $\pi'$
and binary type of $(a,b)$ is $\eta'$.

The construction of $\cA$ is as follows.
We pick a vertex $\pi\in V$ and start with a single element $a$
and sets its unary type as $\pi$.
We then pick an arbitrary outgoing edge $(\pi,\eta,\pi_0)$ in~$H$.
Since $H$ is a good subgraph of $G_{\Psi}$,
the edge $(\pi,\eta,\pi_0)$ is a good edge.
So, by definition,
the system $\cZ_{\pi,\eta,\pi_0}^H$ admits a solution in $\fNinf$.
Let $z_{\eta',\pi'}\mapsto M_{\eta',\pi'}$ 
for every $\eta',\pi'$ be the solution.
We add ``fresh'' elements $b_1,b_2,\ldots$ as 
the children of $a$ such that for every $\eta',\pi'$,
the number of $(\eta',\pi')$-neighbours of $a$
is precisely $M_{\eta',\pi'}$.

We continue with the same process for the elements $b_1,b_2,\ldots$.
For each $j=1,2,\ldots$, let $\pi_j$ be the type of $b_j$ and
$\eta_{j}$ be the type of $(b_j,a)$.
Consider the system $\cZ_{\pi_j,\eta_j,\pi}^H$ 
which admits a solution in $\fNinf$.
Let $z_{\eta'',\pi''}\mapsto M_{\eta'',\pi''}$ 
for every $\eta'',\pi''$ be the solution.
We add ``fresh'' elements $c_{j,1},c_{j,2},\ldots$ as 
the children of $b_j$ such that for every $\eta'',\pi''$:
\begin{itemize}
\item
If $(\eta'',\pi'')=(\eta_j,\pi_j)$, 
the $(\eta'',\pi'')$-children of $b_j$ is precisely 
$M_{\eta'',\pi''}-1$.
\\
Here we need the term $-1$ since the parent node $a$
is $(\eta'',\pi'')$-neighbour of $b_j$.
\item
If $(\eta'',\pi'')\neq(\eta_j,\pi_j)$, 
the $(\eta'',\pi'')$-children of $b_j$ is precisely 
$M_{\eta'',\pi''}$. 
\end{itemize}
We repeat the same process for the elements
$c_{j,1},c_{j,2},\ldots$ ad infinitum and 
obtain an infinite tree-like model $\cA$.
We now show $\cA$ satisfies $\Psi$.
Note that $H$ is a subgraph of $G_{\Psi}$.
By the construction of $G_{\Psi}$,
every unary type and configurations are compatible with $\Psi$.
Since $\cA$ conforms to $H$,
it is immediate that $\cA$ satisfies the part $\forall x \ \gamma(x)$,
as well as the part
$\bigwedge_{i=1}^k \forall x \forall y\ \alpha_i(x,y)$.
That $\cA$ satisfies the part 
$\bigwedge_{i=1}^{\ell}\forall x \ q_i(x)  \to \cP_i(x)$
follows from the construction of the children of each element in $\cA$.
\end{proof}

\subsection{The algorithm}
\label{subsec:algo}

Theorem~\ref{theo:sat-good-subgraph} tells us that to decide if $\Psi$ is satisfiable,
it suffices to find if $G_{\Psi}$ contains a good subgraph.
It can be done as follows.
First, construct the graph~$G_{\Psi}$.
Then, repeatedly delete the bad edges and bad vertices from~$G_{\Psi}$.
It stops when there is no more bad edge or vertex to delete.
If the graph ends up not containing any vertices, 
then $\Psi$ is not satisfiable.
If the graph still contains some vertices, 
then it is a good subgraph of $G_{\Psi}$
and by Theorem~\ref{theo:sat-good-subgraph}, 
$\Psi$ is satisfiable.
Its formal presentation can be found in Algorithm~\ref{alg:gp2}.
It is worth noting that deleting a bad edge may yield a new bad edge,
hence the while-loop.

\begin{algo}\hfill
\label{alg:gp2}
\begin{algorithmic}[1]
\REQUIRE
A sentence $\Psi$ in normal form~\eqref{eq:normal-form}.
\ENSURE
Accept if and only if $\Psi$ is satisfiable.
\STATE
$G:= G_{\Psi}$.
\WHILE{$G$ has a bad edge}
\STATE
Delete the bad edge and its inverse from $G$.
\STATE
Delete all bad vertices (if there is any) from $G$.
\ENDWHILE
\STATE ACCEPT if and only if $G$ is not an empty graph.
\end{algorithmic}
\end{algo}

\subsection{The complexity analysis of Algorithm~\ref{alg:gp2}}
\label{subsec:complexity-analysis}

We start with the following lemma.
\begin{lem}
\label{lem:algo-gp2}
Algorithm~\ref{alg:gp2}
can be implemented using a quantifier-free Presburger arithmetic solver as a black box
and the number of calls to such solver is bounded by $2^{4n+8m}$,
where $n$ and $m$ is the number of unary and binary predicates
in the input sentence $\Psi$.
\end{lem}
\begin{proof}
Note that there are $2^{n+m}$ unary types and $2^{2m}$ binary types. 
Here it is useful to recall that 
atoms such as $R(x,x)$ are considered unary predicates.
Thus, the graph $G_{\Psi}$ has at most $2^{2n+4m}$ edges.
In each iteration we check whether each edge is a bad edge,
hence, there is at most $2^{2n+4m}$ number of calls to the Presburger arithmetic solver.
After each iteration there is at most one less edge,
hence, the $2^{4n+8m}$ upper bound.
\end{proof}

Next, we show that checking whether the system 
$\cZ_{\pi_1,\eta,\pi_2}^{G}$ admits a solution in $\fNinf$
can be done in non-deterministic polynomial time, and hence, in deterministic exponential time,
in the length of the input $\Psi$.

\begin{lem}
\label{lem:complexity-Z}
For every edge $(\pi_1,\eta,\pi_2)$ in $G$,
checking whether the system $\cZ_{\pi_1,\eta,\pi_2}^{G}$ admits a solution in $\fNinf$
takes non-deterministic polynomial time in the length of the input sentence.
\end{lem}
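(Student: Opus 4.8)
The plan is to show that deciding satisfiability of the system $\cZ_{\pi_1,\eta,\pi_2}^{G}$ over $\bbN$ reduces, in exponential time, to deciding feasibility of an ordinary homogeneous integer linear program $A\vx = 0 \wedge B\vx \geq \vc$, which in turn is decidable by checking for a rational solution and hence lies in $\ptime$ in the size of that program. The key point is to control the size of this program: the number of variables $z_{\eta',\pi'}$ is bounded by $|\cK|\cdot|\Pi|$, which is exponential in the length of $\Psi$ (since $|\Pi| = 2^{O(|\Sigma|)}$ and $|\cK| = 2^{O(|\Sigma|)}$); the number of constraints coming from the conjuncts $\cQ_i^{G,\pi_1}$ is at most $\ell$, which is linear in $|\Psi|$; and the integer coefficients $\lambda_{i,j}$, $\delta_i$ appearing in $\Psi$ are written in binary, so their absolute values are at most exponential in $|\Psi|$, meaning they and their logarithms can be manipulated in exponential time.

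First I would preprocess each constraint in $\cZ_{\pi_1,\eta,\pi_2}^{G}$ into standard ILP form. The inequality $z_{\eta,\pi_2} \geq 1$ is already in shape. Each $\cQ_i^{G,\pi_1}$ is a single linear (in)equality in the variables $z_{\eta',\pi'}$; I would expand the inner sums $\sum_{(\eta',\pi')\in\cN_G(\pi_1),\, R_{i,j}(x,y)\in\eta'} z_{\eta',\pi'}$ into an explicit linear form over all variables (coefficient of $z_{\eta',\pi'}$ equal to $\sum_{j : R_{i,j}(x,y)\in\eta'}\lambda_{i,j}$, which is computable in exponential time). For $\circledast_i \in \{=,\leq,\geq,<,>\}$ this becomes one or two linear constraints in the usual way (strict inequalities over $\bbN$ become weak ones by shifting the constant by $1$). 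The only nonstandard case is $\circledast_i \in \{\equiv_d, \not\equiv_d\}$: for $\equiv_d$ I would introduce a fresh auxiliary integer variable $w_i \geq 0$ and replace the constraint by $(\text{linear form}) - d\cdot w_i = r$ for the appropriate residue $r$, adding also $w_i \leq (\text{upper bound})$ if a bound is needed; for $\not\equiv_d$ I would branch over the $d-1$ admissible residues, noting $d$ is itself at most exponential in $|\Psi|$, so this blow-up is still exponential overall. After this, $\cZ_{\pi_1,\eta,\pi_2}^{G}$ (together with sign constraints $z_{\eta',\pi'}\geq 0$) is an explicit system $A\vx = 0 \wedge B\vx \geq \vc$ of exponential size with entries of at most exponential absolute value. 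Feasibility over $\bbN$ of such a homogeneous-plus-inequality system coincides with feasibility over $\bbQ_{\geq 0}$ (as recalled in the introduction), which is solvable by linear programming in time polynomial in the size of the system, hence exponential in $|\Psi|$.

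I expect the main obstacle — and the reason Theorems~\ref{theo:caratheodory-integer} and~\ref{theo:papa-ilp} are cited just before — to be the bookkeeping needed so that the \emph{witness} solution is not only guaranteed to exist but can be taken to have small support and small entries. This matters because the solution $(M_{\eta',\pi'})$ extracted here is exactly what the model construction in the (if)-direction of Theorem~\ref{theo:sat-good-subgraph} consumes, and in the final complexity argument for Algorithm~\ref{alg:gp2} one needs that only exponentially many of the $M_{\eta',\pi'}$ are nonzero and that each is at most exponentially large. So, rather than merely invoking LP feasibility, I would phrase the conclusion of the lemma through Theorem~\ref{theo:caratheodory-integer} (support bounded by $2d\log_2(4dM)$, with $d$ the exponential number of rows and $M$ the exponential coefficient bound — still exponential) and Theorem~\ref{theo:papa-ilp} (entries bounded by $t(dM)^{2d+1}$, again exponential): guessing the support (of exponential size) and then solving the resulting bounded, reduced LP shows feasibility can be decided, and a small witness recovered, in exponential time. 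The routine part I would not belabor is the case analysis over the eight possible relation symbols $\circledast_i$ and the translation of congruences into auxiliary variables; the delicate part is tracking that every ``polynomial in the size of the ILP'' is still ``exponential in $|\Psi|$'' and that the branching introduced by $\not\equiv_d$ and by the modular auxiliary variables does not push the running time beyond exponential.
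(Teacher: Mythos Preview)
Your proposal contains a genuine gap in the complexity accounting that undermines the intended bound.

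First, the rational-feasibility shortcut does not apply here. The system $\cZ_{\pi_1,\eta,\pi_2}^{G}$ is \emph{not} of the homogeneous shape $A\vx=0\wedge B\vx\geq\vc$: the equalities carry nonzero right-hand sides $\delta_i$, and the presence of $\equiv_d$ constraints means integer and rational feasibility simply do not coincide (e.g., $2z=3$). The remark in the introduction about Pratt-Hartmann's reduction concerns a different, genuinely homogeneous encoding tailored to $\gctwo$; it is not a general fact about the systems arising here. You seem to sense this and fall back on Theorems~\ref{theo:caratheodory-integer} and~\ref{theo:papa-ilp}, which is the right move, but your application of them is where the argument breaks.

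The key parameter in both theorems is $d$, the number of \emph{rows} (constraints). You write ``$d$ the exponential number of rows'', but that is the number of \emph{columns}: the variables $z_{\eta',\pi'}$ are exponentially many, while the constraints are just $z_{\eta,\pi_2}\geq 1$ together with at most $\ell$ many $\cQ_i^{G,\pi_1}$, so $d=O(\ell)$ is \emph{polynomial} in $|\Psi|$ (your congruence preprocessing adds at most one row per $\cQ_i$, and the $\not\equiv_d$ branching multiplies the number of \emph{instances}, not the number of rows within each instance). This distinction is the whole point of the lemma. With $d$ polynomial and $M$ at most exponential, Theorem~\ref{theo:caratheodory-integer} gives a solution with $2d\log_2(4dM)$, i.e., \emph{polynomially} many nonzero entries, and Theorem~\ref{theo:papa-ilp} bounds each entry by $t(dM)^{2d+1}$, which is singly exponential because the exponent $2d+1$ is polynomial. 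One can then enumerate the support (exponentially many choices of a polynomial-size subset of columns) and the values (exponentially many per entry) in exponential time. If $d$ were exponential, as you assert, the Papadimitriou bound $t(dM)^{2d+1}$ would be doubly exponential and your enumeration would not fit in $\exp$. So the paper's proof is not merely a cleaner bookkeeping of your approach; it hinges on observing that the row count stays polynomial, which your write-up gets wrong.
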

\begin{proof}
Let $\Psi$ be the input sentence in the normal form~(\ref{eq:normal-form}),
where $n$ and $m$ are the number of unary and binary relation symbols in $\Psi$.
Thus, there are $2^{n+m}$ unary types and $2^{2m}$ binary types. 
Moreover, the system $\cZ_{\pi_1,\eta,\pi_2}^{G}$ contains at most $\ell+1$ linear constraints and $2^{n+3m}$ variables.

Here we invoke Corollary~\ref{cor:caratheodory-inf} which states that
there are constants $c_1,c_2$ such that
if $\cZ_{\pi_1,\eta,\pi_2}^{G}$ admits a solution in $\fNinf$,
then the number of variables assigned with non-zero values is $c_1\ell\log_2(c_2\ell M))$
and each value is either $\infty$ or at most $c_1\ell(\ell M)^{c_2\ell}$,
where $M$ is the maximal non-$\infty$ constant in the system $\cZ_{\pi_1,\eta,\pi_2}^{G}$.
Note that the value $c_1\ell(\ell M)^{c_2\ell}$ takes only polynomially many bits in the length of $\Psi$.

We can thus design a non-deterministic polynomial time algorithm for checking
the satisfiability of $\cZ_{\pi_1,\eta,\pi_2}^{G}$:
Guess the set of variables that are supposed to take non-zero values
and their values 
and ACCEPT if and only if it is indeed a solution for $\cZ_{\pi_1,\eta,\pi_2}^{G}$.
\end{proof}

Since constructing the graph $G_{\Psi}$ takes exponential time,
combining it with Lemmas~\ref{lem:algo-gp2} and~\ref{lem:complexity-Z}
implies the exponential time upper bound of Algorithm~\ref{alg:gp2}.

\begin{thm}
\label{theo:algo-run-time}
Algorithm~\ref{alg:gp2} runs in exponential time in the length of the input sentence.
\end{thm}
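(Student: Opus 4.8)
The plan is to bound, separately, (i) the size of the graph $G_\Psi$ that is built in line~1, (ii) the number of iterations of the while-loop, and (iii) the cost of each iteration, and then multiply these together. First I would address the size of $G_\Psi$. Writing $n$ and $m$ for the number of unary and binary relation symbols occurring in $\Psi$, there are at most $2^{n+m}$ unary types and at most $2^{2m}$ binary types, so $G_\Psi$ has at most $2^{n+m}$ vertices and at most $2^{n+m}\cdot 2^{2m}\cdot 2^{n+m} = 2^{2n+4m}$ edges; both quantities are exponential in $|\Psi|$. Constructing $G_\Psi$ requires, for each candidate vertex $\pi$, checking the entailment $\pi(x)\models\gamma(x)$, and for each candidate configuration $(\pi,\eta,\pi')$, checking the entailments in Definitions~\ref{def:compatible-vertex}--\ref{def:compatible-edge}; each such entailment is a propositional validity test on a formula of size linear in $|\Psi|$, hence doable in time exponential in $|\Psi|$. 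Multiplying by the exponentially many candidates still gives an exponential-time bound for line~1.

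Next I would bound the loop. Each iteration of the while-loop deletes at least one edge (line~3), and possibly some vertices (line~4), and the algorithm never adds anything back; since $G$ starts with at most $2^{2n+4m}$ edges, the loop runs at most $2^{2n+4m}$ times, i.e. exponentially often. Within a single iteration we must (a) find a bad edge and (b) find all bad vertices. For (a), it suffices to iterate over the at most $2^{2n+4m}$ current edges $(\pi_1,\eta,\pi_2)$ and test, for each, whether the system $\cZ^{G}_{\pi_1,\eta,\pi_2}$ has a solution in $\bbN$; by Lemma~\ref{lem:complexity-Z} each such test runs in time exponential in $|\Psi|$, so step (a) costs exponential time. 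For (b), a vertex $\pi$ is bad exactly when it has no outgoing edge in the current $G$ and, for some $i$ with $q_i(x)\in\pi$, the constraint $\cQ^{G,\pi}_i$ has no zero solution; with no outgoing edges $\cQ^{G,\pi}_i$ degenerates to the trivial (in)equality $0 \circledast_i \delta_i$, which is decidable in constant time, so checking all (exponentially many) vertices takes exponential time. Hence each iteration runs in exponential time.

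Finally, combining the three bounds: line~1 is exponential, the loop runs exponentially many times, each iteration is exponential, and line~6 is a single emptiness test; the product of exponentials in $|\Psi|$ is again exponential in $|\Psi|$, which proves the theorem. The only genuinely non-routine ingredient is the per-iteration cost of deciding solvability of $\cZ^{G}_{\pi_1,\eta,\pi_2}$ over $\bbN$, and that has already been handled in Lemma~\ref{lem:complexity-Z} via the Carathéodory-type sparsity bound (Theorem~\ref{theo:caratheodory-integer}) together with the small-solution bound (Theorem~\ref{theo:papa-ilp}); everything else is bookkeeping. I expect the main point requiring care in the write-up is simply making sure that the exponential bounds on $G_\Psi$, on the iteration count, and on the cost of Lemma~\ref{lem:complexity-Z} are all stated with respect to the same measure $|\Psi|$, so that their product is still single-exponential rather than double-exponential.
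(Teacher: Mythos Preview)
Your proof is correct and follows essentially the same approach as the paper: bound the number of edges of $G_\Psi$ by $2^{2n+4m}$ to bound the iteration count, and invoke Lemma~\ref{lem:complexity-Z} for the per-iteration cost of detecting a bad edge. The paper's own proof is terser and omits the details you spell out about constructing $G_\Psi$ in line~1 and checking bad vertices in line~4, but these are routine and your treatment of them is fine.
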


We remark that the exponential time upper bound of Algorithm~\ref{alg:gp2} 
also holds when the semantics of LP quantifiers is defined on the structure $\fN$.
The proof is exactly the same.
The only difference is we 
invoke Corollary~\ref{cor:caratheodory} instead of
Corollary~\ref{cor:caratheodory-inf}.

\subsection{Comparison with the approach by Bednarczyk and Fiuk~\texorpdfstring{\cite{BF22}}{[\cite{BF22}]}}
\label{subsec:comparison}

An alternating polynomial space algorithm was proposed Bednarczyk and Fiuk~\cite{BF22}
for deciding the satisfiability of $\gptwo$ sentences.\footnote{It is well known that
the class of languages decidable by alternating polynomial space Turing machines
is equivalent to the class of languages decidable by 
deterministic exponential time Turing machines~\cite{alternation}.}
Intuitively, it tries to build a tree-like model
that satisfies the input sentence (if satisfiable).
It starts by guessing the unary type of the root node
and all the unary types of its children as well as
the binary types of the edges connecting them.
Using Theorem~\ref{theo:caratheodory},
the number of different \emph{unary types} of the children 
is polynomially bounded (in the length of the input sentence).\footnote{Note
that the bound only holds for the number of unary types of the children,
not the number of children itself which can be exponential.}
To verify that the LP quantifiers are satisfied,
it guesses a solution that respects the guessed binary types of
the edges connecting the root nodes and the children.

In comparison, our algorithm is a straightforward 
deterministic algorithm
that can be implemented easily
using available Presburger solvers as a black box.
Nevertheless our algorithm is worst-case optimal,
as even in the best case scenario it still requires exponential time 
to build the graph representation,
in contrast to the tableaux based approaches
whose performance can be efficient depending on the instances.

%

\section{Concluding remarks}
\label{sec:conclusion}

In this paper we consider the extension of $\gftwo$
with the local Presburger quantifiers which can express rich Presburger constraints
while maintaining a deterministic exponential time upper bound.
It captures various natural DLs with counting up to
$\mathcal{ALCIH}b^{\textsf{self}}$ and $\mathcal{ALCIHQ}$ without constant symbols.
The proof is via a novel, yet simple algorithm,
which is reminiscent of the type-elimination approach.

We note that the proof of Theorem~\ref{theo:sat-good-subgraph} relies on the infinity of the model.
We leave a similar characterization for the finite models for future work.
It is worth noting that the finite satisfiability of $\gptwo$ 
has been shown to be decidable in 3-$\nexp$~\cite{BOPT21},
though the precise complexity is still not known.

\section*{Acknowledgment}
\noindent 
We thank Bartosz Bednarczyk for his comments on the preliminary drafts of this work. 
We would like to thank the anonymous reviewers for their detailed 
and constructive feedback which greatly improve the presentation 
of this paper.
This work was done when both authors were in National Taiwan University.
We acknowledge generous financial support of 
Taiwan Ministry of Science and Technology under grant no.~109-2221-E-002-143-MY3.

\bibliographystyle{alphaurl}
\bibliography{aa-bib-c2-algo}

\end{document}